 \definecolor{BLACK}{gray}{0}
 \definecolor{WHITE}{gray}{1}
 \definecolor{RED}{rgb}{1,0,0}
 \definecolor{GREEN}{rgb}{0,1,0}
 \definecolor{BLUE}{rgb}{0,0,1}
 \definecolor{CYAN}{cmyk}{1,0,0,0}
 \definecolor{MAGENTA}{cmyk}{0,1,0,0}
 \definecolor{YELLOW}{cmyk}{0,0,1,0}
\theoremstyle{plain}
\newtheorem{thm}{\protect\theoremname}
\newenvironment{proof}[1][\protect\proofname]{\par
\normalfont\topsep6\p@\@plus6\p@\relax
\trivlist
\itemindent\parindent
\item[\hskip\labelsep
\scshape
#1]\ignorespaces
}{%
\endtrivlist\@endpefalse
}
\providecommand{\proofname}{Proof}
\definecolor{myurlcolor}{rgb}{0,0,0.7}
\providecommand{\theoremname}{Theorem}
\begin{document}
\title{Quantum Coherence: Reciprocity and Distribution}
\author{Asutosh Kumar}
\email{asukumar@hri.res.in}
 
\affiliation{
Harish-Chandra Research Institute, Allahabad-211019, India}
\affiliation{Homi Bhabha National Institute, Anushaktinagar, Mumbai 400094, India}

\begin{abstract}
Quantum coherence is the outcome of the superposition principle. Recently, it has been theorized as a quantum resource, and is the premise of quantum correlations in multipartite systems. 
It is therefore interesting to study the coherence content and its distribution in a multipartite quantum system.
In this work, we show analytically as well as numerically the reciprocity between coherence and mixedness of a quantum state. We find that this trade-off is a general feature in the sense that it is true for large spectra of measures of coherence and of mixedness.
We also study the distribution of coherence in multipartite systems by looking at monogamy-type relation--which we refer to as additivity relation--between coherences of different parts of the system.  
We show that for the Dicke states, while the normalized measures of coherence violate the additivity relation, the unnormalized ones satisfy the same.  
\end{abstract}

\maketitle

\section{Introduction}
\label{intro}
From everyday life experiences, we learn that arbitrary operations cannot do an assigned job. That is, specific resources--like allowed operations, ``free assets'' that one can use at will, and some ``force'' or catalyst in a prescribed amount--are needed to carry out a particular task. Therefore, to establish a quantitative theory of any physical resource, one needs to address the following fundamental issues: (i) the characterization or unambiguous definition of resource, (ii) the quantization or valid measures, and (iii) the transformation or manipulation of quantum states under the imposed constraints \cite{const1,const2,const3,const4}. Several useful quantum resources like purity \cite{purity}, entanglement \cite{ent1,ent2,ent3,ent4,ent5}, reference frames \cite{ref-frame1,ref-frame2}, thermodynamics \cite{thmdy1,thmdy2}, etc. have been identified and quantified until now. Recently, Baumgratz \emph{et al.} in Ref. \cite{baumgratz}, provided a quantitative theory of coherence as a new quantum resource, borrowing the formalism already established for entanglement \cite{ent1,ent2,ent3,ent4,ent5}, thermodynamics \cite{thmdy1,thmdy2} and reference frames \cite{ref-frame1,ref-frame2}.

Coherence arises from the superposition principle, and is defined for single as well as multipartite systems.
Quantum coherence is identified by the presence of off-diagonal terms in the density matrix, and hence is a basis-dependent quantity. It being a basis-dependent quantity, local and nonlocal unitary operations can alter the amount of coherence in a quantum system. 
A density matrix has zero coherence with respect to a specific basis if it is diagonal in that basis.
Diagonal density matrices, in the above sense, therefore represent essentially the classical mixtures. 
A coherent quantum state is considered as a resource in thermodynamics as it allows non-trivial transformations \cite{coh-resource}.
Quantum superposition is the most fundamental feature of quantum mechanics. Quantum coherence is a direct consequence of the superposition principle. Moreover, combined with the tensor product structure of quantum state space, it gives rise to the novel concepts such as entanglement and quantum correlations.
It, being the premise of quantum correlations in multipartite systems, has attracted the attention of 
quantum information community significantly, and in addition to its quantification \cite{coh-measure1,coh-measure2,coh-measure3}, other developments like the freezing phenomena \cite{freezing}, the coherence transformations under incoherent operations \cite{coh-trans}, establishment of geometric lower bound for a coherence measure \cite{diogo}, the complementarity between coherence and mixedness \cite{complement}, its relation with other measures of quantum correlations and creation of coherence using unitary operations \cite{yao, coh-avijit}, erasure of quantum coherence \cite{erasing-uttam}, and catalytic transformations of coherence \cite{catalytic-uttam} have been reported recently.

In this paper, we revisit the complementarity between coherence and mixedness of a quantum state, and the distribution of coherence in multipartite systems in considerable detail. We provide analytical and numerical results in this regard. 
This paper is organized as follows. In Sec. \ref{meas}, we briefly define the measures that quantify quantum coherence and mixedness. In Sec. \ref{tradeoff}, we show that the reciprocity between coherence and mixedness in quantum systems is an extensive feature in the sense that it holds for large spectra of measures of coherence and of mixedness. In Sec. \ref{coh-mono}, we discuss the distribution of coherence in multipartite quantum systems. 
Numerical investigation unravels the fact that the percentage of quantum states satisfying the additivity relation of coherence increases with increasing number of parties, with increment in the rank of quantum states, and with raising of the power of coherence measures under investigation.
We provide conditions for the violation of the additivity relation of the relative entropy of coherence.
In Sec. \ref{x-state}, we investigate the distribution of coherence in a special type of quantum states called ``X''-states, and provide examples. Finally, we conclude our findings in Sec. \ref{discussion}.

\section{Quantifying coherence and mixedness of a quantum state}
\label{meas}
In this section, we briefly review the axiomatic approach to characterize and quantify coherence, as proposed in Ref. \cite{baumgratz}, and mixedness of a quantum system.


\subsection{Quantum coherence} 
In the framework of Ref. \cite{baumgratz}, all the diagonal states, in a given reference basis, constitute a set of incoherent states, denoted by $\mathcal{I}$. 
And a completely positive trace preserving (CPTP) map is an incoherent operation if it possesses a Kraus operator decomposition $\{K_t\}$ such that $K_t\rho K_t^{\dagger}$ is incoherent for every incoherent state $\rho \in \mathcal{I}$.
%
A function, $C(\rho)$, is a valid measure of quantum coherence of the state $\rho$ if it satisfies the following conditions \cite{baumgratz}:
(1) $C(\rho) = 0$ iff $\rho \in \mathcal{I}$.
(2a) Monotonicity under the incoherent operations, i.e., $C\left(\Phi_I(\rho)\right) \leq C(\rho)$.
(2b) Monotonicity under the selective incoherent operations on an average, i.e., $\sum _k p_k  C(\rho_k) \leq C(\rho)$, where $\rho_k = M_k \rho M^\dagger_k/p_k$, $p_k = \mathrm{Tr}M_k \rho M^\dagger_k$, and $M_k$ are the ``incoherent'' Kraus operators as described above. That is, $C(\rho)$ is non-increasing on an average under the selective incoherent operations.
(3) Convexity or nonincreasing under mixing of quantum states, i.e., $C(\sum_k p_k \rho_k) \leq \sum_k p_k C(\rho_k)$. That is, coherence cannot increase under mixing. 

It is emphasized that the incoherency condition, $M \mathcal{I}M^{\dagger} \in \mathcal{I}$, places a severe constraint on the structure of the incoherent Kraus operator \(M\) \cite{yao}: there can be at most one nonzero entry in every column of \(M\). Thus, if the incoherent Kraus operator $M$ belongs to the set of $r \times c$  matrices ${\cal M}_{r,c}$, then the maximum number of possible structures of $M$ is $r^c$.
Note further that conditions (3) and (2b) together imply condition (2a) \cite{baumgratz}:
\begin{align}
\label{eq:2b32a}
C\left(\Phi_{I}(\rho)\right)=C\left(\sum_n p_n \rho_n \right) \overset{(3)}{\leq} \sum_n p_n C(\rho_n) \overset{(2b)}{\leq} C(\rho).
\end{align}

Measures that satisfy the above conditions, include 
$l_1$ norm and relative entropy of coherence \cite{baumgratz} and the skew information \cite{coh-measure1}. Coherence can also be quantified through entanglement. It was shown in Ref. \cite{coh-measure2} that entanglement measures which satisfy above conditions can be used to derive generic monotones of quantum coherence. 
Recall that quantum coherence is a {\it basis-dependent} quantity. Yao {\it et al.} in \cite{yao} asked whether a basis-independent measure of quantum coherence can be defined. 
They observed that the {\it basis-free} coherence is equivalent to quantum discord \cite{yao}, supporting the fact that coherence is a form of quantum correlation in multipartite quantum systems.
Viewing a $d$-dimensional quantum state $\rho$, in the reference basis $\{\ket{i}\}$, as a $d^2$-dimensional vector, its $l_p$ norm is 
\begin{align}
\|\rho \|_p = \left(\sum_{i,j} |\rho_{ij}|^p\right)^{\frac{1}{p}},
\end{align}
where $\rho_{ij}=\langle i | \rho | j\rangle$.
The quantity $C_{l_1}(\rho)$, which is based on $l_1$ norm, and given by
\begin{align}
 C_{l_1}(\rho) = \sum_{i\neq j} |\rho_{ij}|,
\end{align}
is a valid measure of coherence \cite{baumgratz}.
Another quantity, $C_{r}(\rho) = \min_{\sigma \in \mathcal{I}} S(\rho \parallel \sigma) = S(\rho_I) - S(\rho)$, is the relative entropy of coherence,
where $\mathcal{I}$ is the set of incoherent states in the reference basis, $S(\rho \parallel \sigma) = \mbox{Tr} \rho(\log\rho - \log\sigma)$ is the relative entropy between $\rho$ and $\sigma$, and $\rho_I=\sum_i \langle i | \rho | i\rangle | i\rangle\langle i |$. Furthermore, a geometric measure of coherence is also proposed \cite{baumgratz, coh-measure2, coh-measure3} which is a full coherence monotone \cite{coh-measure2}. The geometric measure is given by $C_g(\rho) = 1-\max_{\sigma\in\mathcal{I}} F(\rho,\sigma)$, where $\mathcal{I}$ is the set of all incoherent states and $F(\rho,\sigma) = \left(\mathrm{Tr}\left[\sqrt{\sqrt{\sigma}\rho\sqrt{\sigma}}\right]\right)^2$ is the fidelity \cite{nielsen00} of the states $\rho$ and $\sigma$.
The maximally coherent pure state is defined by $\ket{\psi_d} = \frac{1}{\sqrt{d}}\sum_{i=0}^{d-1} \ket{i}$ \cite{baumgratz}, for which $C_{l_1}(\ket{\psi_d}\bra{\psi_d}) = d-1$ and $C_r(\ket{\psi_d}\bra{\psi_d}) = \mbox{ln}~d$. 

\subsection{Mixedness}
A quantum system which is not pure is mixed. A pure quantum system described by density matrix $\rho$ is characterized by $\mbox{Tr}(\rho^2)=\mbox{Tr}(\rho)=1$. $\mbox{Tr}(\rho^2)$ is called the purity of $\rho$. Noise in various forms, including inevitable interaction with environment, degrades the purity of a quantum state and renders it mixed. Mixedness characterizes disorder or loss of information, and is a complementary quantity to the purity of a quantum system.
There are several ways to quantify the mixedness of a quantum state in the literature. For an arbitrary $d$-dimensional state, the mixedness, based on normalized linear entropy \cite{norm-linear-ent}, is given as  
\begin{align}
 M_l(\rho) = \frac{d}{d-1}\left( 1 - \mathrm{Tr}\rho^2 \right).
\end{align}
Therefore, for every quantum system, mixedness varies between zero and unity. 
The other operational measures of mixedness of a quantum state $\rho$ include the von Neumann entropy $S(\rho)=-\mbox{Tr}(\rho\mbox{ln} \rho)$, and geometric measure of mixedness which is given by $M_g(\rho) := F(\rho,\mathbb{I}/d) = \frac{1}{d}\left( \mathrm{Tr}\sqrt{\rho} \right)^2$.  
For a $d$-dimensional pure quantum states $\ket{\phi_d}$, while $M_l(\ket{\phi_d})$ and $S(\ket{\phi_d})$ vanish, $M_g(\ket{\phi_d})=\frac{1}{d}$. Thus, $M_l(\ket{\phi_d})$ and $S(\ket{\phi_d})$ lie between $0$ and $1$, and
$M_g(\ket{\phi_d})$ varies between $\frac{1}{d}$ and $1$.

\section{Reciprocity between quantum coherence and mixedness}
\label{tradeoff}

As mixedness is complementary to purity and purity is closely related to quantum coherence, it is natural to 
investigate the restrictions imposed by the mixedness of a system on its quantum coherence. In this section, we show analytically and numerically that there exists a trade-off between the two quantities for different measures of coherence and mixedness.\\
For any arbitrary quantum system $\rho$ in $d$ dimensions, quantum coherence, as quantified by the $l_1$ norm, and mixedness, in terms of the normalized linear entropy, satisfies the following inequality
\begin{align}
\label{eq:comp-l1}
\frac{C_{l_1}^2(\rho)}{(d-1)^2} + M_l(\rho) \leq 1.
\end{align}

Inequality (\ref{eq:comp-l1}) dictates that for a fixed amount of mixedness, the maximal amount of coherence is limited, and vice-versa.
This important trade-off relation between quantum coherence and mixedness was obtained in Ref. \cite{complement} using the parametric form of an arbitrary $d$-dimensional density matrix, written in terms of the generators, ${\cal G}_i$, of $SU(d)$ \cite{nielsen00, eberly81, mahler98, kimura03, khaneja03}, as
\begin{align}
\label{qd}
\rho = \frac{\mathbb{I}}{d} + \frac{1}{2}\vec{x}.\vec{{\cal G}} = \frac{\mathbb{I}}{d} + \frac{1}{2}\sum_{i=1}^{d^2-1}x_i {\cal G}_i,
\end{align}
where $x_i = \mathrm{Tr}[\rho {\cal G}_i]$. The generators ${\cal G}_i$ satisfy $(i)$  ${\cal G}_{i} = {\cal G}_{i}^\dag$, $(ii)$ $\mbox{Tr}({\cal G}_{i})=0$, and $(iii)$ $\mbox{Tr}({\cal G}_{i}{\cal G}_{j})=2\delta_{ij}$. In this representation, three-dimensional state is
\begin{eqnarray}
\label{qs3}
\rho=\left(
\begin{array}{ccc}
\frac13+x_7+\frac{x_8}{\sqrt{3}} & x_1-ix_4 & x_2-ix_5\\
x_1+ix_4 & \frac13-x_7+\frac{x_8}{\sqrt{3}} & x_3-ix_6\\
x_2+ix_5 & x_3+ix_6 & \frac13-\frac{2x_8}{\sqrt{3}}
\end{array}
\right).
\end{eqnarray}

The $l_1$ norm of coherence of a $d$-dimensional system, given by  Eq. (\ref{qd}), can be written as \cite{complement}
\begin{align}
 C_{l_1}(\rho) 
 & = \sum_{i=1}^{(d^2-d)/2}\sqrt{x_i^2 + x_{i+(d^2-d)/2}^2 }.
\label{l1eq}
\end{align}
And, the mixedness is given by
\begin{align}
 M_l(\rho) = 1 - \frac{d}{2(d-1)}\sum_{i=1}^{d^2-1}x_i^2.
\label{mixeq}
\end{align}

Eq. (\ref{eq:comp-l1}) ensures that the normalized coherence, $\frac{C_{l_1}(\rho)}{(d-1)}$, of a quantum system with mixedness $M_l(\rho)$, is bounded to a region below the ellipse $\frac{C_{l_1}^2(\rho)}{(d-1)^2} + \left(\sqrt{M_l(\rho)}\right)^2 = 1$. The quantum states with (normalized) quantum coherence that lie on the conic section are the maximally coherent states corresponding to a fixed mixedness and vice-versa \cite{complement}.\\

It is interesting to note that provided $C_{l_2}(\rho) = \left(\sum_{i\neq j} |\rho_{ij}|^2\right)^\frac12$ were a valid coherence measure, one could easily show that a complementarity relation, analogous to Eq. (\ref{eq:comp-l1}), holds: 
\begin{align}
\label{eq:comp-l2}
\frac{C_{l_2}^2(\rho)}{\left(\sqrt{1-\frac{1}{d}}\right)^2} + M_l(\rho) \leq 1.
\end{align}
In Ref. \cite{baumgratz}, it was shown that the quantity 
$\tilde{C}_{l_2}(\rho)=C_{l_2}^2(\rho)=\sum_{i\neq j} |\rho_{ij}|^2$ satisfies conditions (1) and (3). However, it fails to satisfy the condition (2b) in general. 
Thus it is not clear whether $C_{l_2}^2(\rho)$ is a valid coherence measure in the framework of above resource theory.  

A natural question that arises is whether the reciprocity between quantum coherence and mixedness is measure specific? Put differently, does complementarity between coherence and mixedness hold for other measures of coherence and mixedness. 
It is trivial to note that
\begin{align}
\label{eq:comp-ent}
\frac{C_{r}(\rho)}{\mbox{ln}~d} + \frac{S(\rho)}{\mbox{ln}~d} \leq 1,
\end{align}
and
\begin{align}
\label{eq:comp-geom}
C_{g}(\rho) + M_g(\rho) &= 1-\left(\max_{\sigma\in\mathcal{I}} F(\rho,\sigma)-F(\rho, \mathbb{I}/d)\right) \nonumber \\
& \leq 1.
\end{align}
We observe from Eqs. (\ref{eq:comp-l1}), (\ref{eq:comp-ent}) and (\ref{eq:comp-geom}) that for valid coherence measures, there is trade-off between functions of normalized coherence and normalized mixedness. This complementarity between coherence and mixedness appears to be a general feature. It would be an interesting exercise to investigate whether a given measure of coherence respects reciprocity with different measures of mixedness. In particular, we are interested in whether the following relations hold:
\begin{align}
\frac{C_{l_1}^2(\rho)}{(d-1)^2} + \frac{S(\rho)}{\mbox{ln}~d} & \leq 1 \nonumber \\
\frac{C_{r}(\rho)}{\mbox{ln}~d} + M_l(\rho) & \leq 1, \ \ \mbox{etc.}
\label{eq:comp-gen}
\end{align}
For rank-1 (pure) states, the above complementarity relations are trivially satisfied since mixedness for pure states is, by definition, zero (for geometric measure of coherence, one will have to set 
$M_g(|\psi\rangle)=0$ by hand). Interestingly, for higher rank quantum states also the above reciprocity relations hold. We provide numerical evidences which suggest that trade-off between coherence and mixedness is indeed an extensive feature of quantum systems (see Figs. \ref{fig:coh-mix-reciprocity-3q} and \ref{fig:coh-mix-reciprocity-4q}). Though the reciprocity relation, $\frac{C_{r}(\rho)}{\mbox{ln}~d} + M_l(\rho) {\leq} 1$, is in conflict, it is well below the trivial value 2, for all states. We observe that higher is the rank of quantum states and number of qubits, more is the violation. We found numerically that the reciprocity relation, $\frac{C_{r}(\rho)}{\mbox{ln}~d} + M_l(\rho) {\leq} 1$, is violated by two-qubit states also. An example of a two-qubit state which violates this relation is given in Eq. (\ref{2qubit-violation}).

\begin{widetext}
\begin{eqnarray}
\label{2qubit-violation}
\rho=\left(
\begin{array}{cccc}
0.2501 & 0.0490-0.0090 i & -0.1392-0.1148 i & -0.2141-0.0515 i\\
0.0490+0.0090 i & 0.2064 & 0.1588-0.0438 i & 0.0137+0.0650 i\\
-0.1392+0.1148 i & 0.1588+0.0438 i & 0.3001 & 0.1858+0.0115 i\\
-0.2141+0.0515 i & 0.0137-0.0650 i & 0.1858-0.0115 i & 0.2434
\end{array}
\right).
\end{eqnarray}
Note that $\rho = \rho^{\dagger}$, $Tr \rho = 1$, $Tr \rho^2 = 0.5539$, and eigenvalues of $\rho$ are $\{0.664,~0.336,~0,~0\}$. Hence, $\rho$ is a valid rank-2 density matrix. For this density matrix,
$\frac{C_{r}(\rho)}{2} + M_l(\rho)=0.5334+0.5948=1.1282 > 1$. 
\end{widetext}

\begin{center}
\begin{figure}[htb]
\includegraphics[width=1.6in, angle=0]{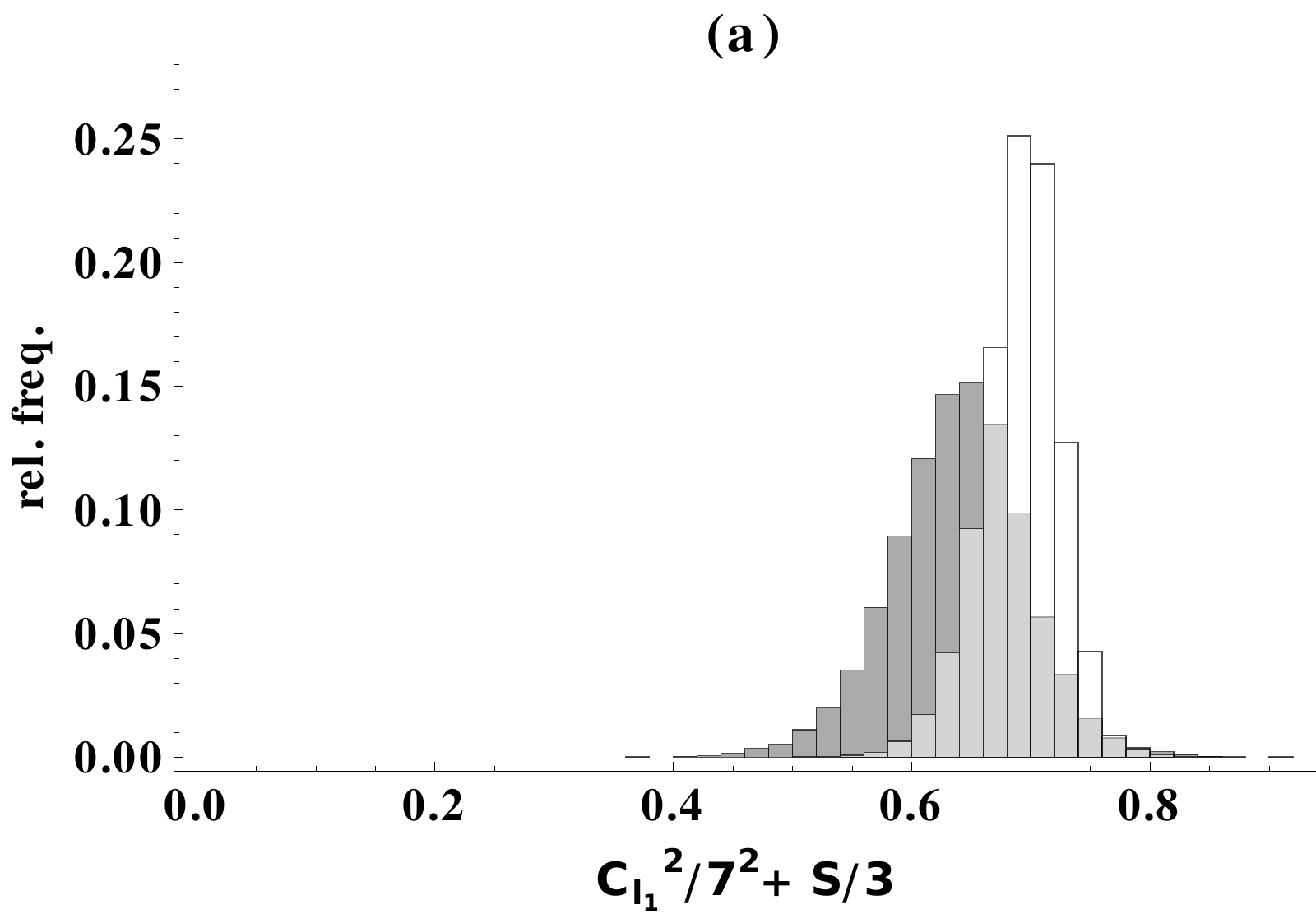} 
\includegraphics[width=1.6in, angle=0]{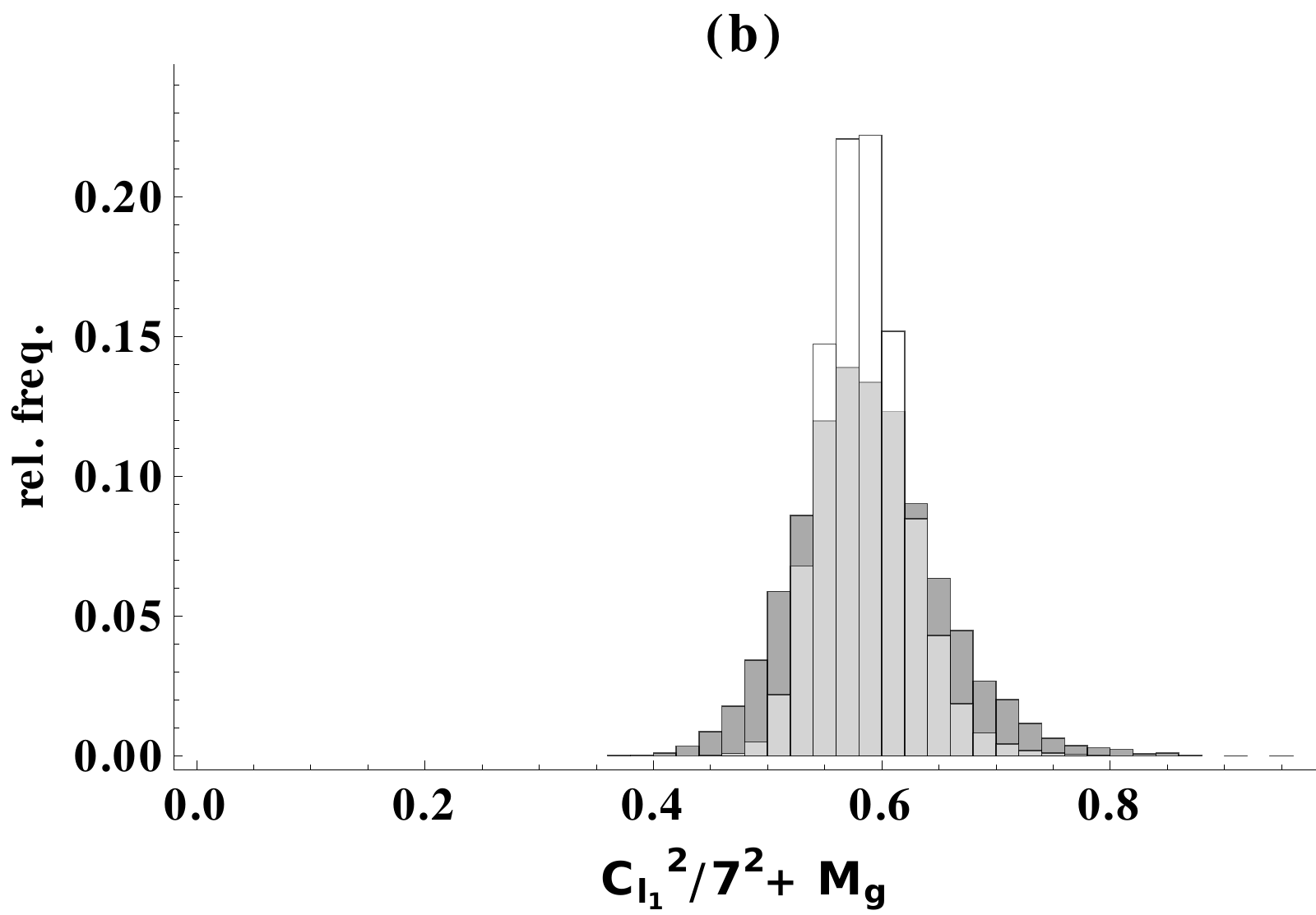}\\
\includegraphics[width=1.6in, angle=0]{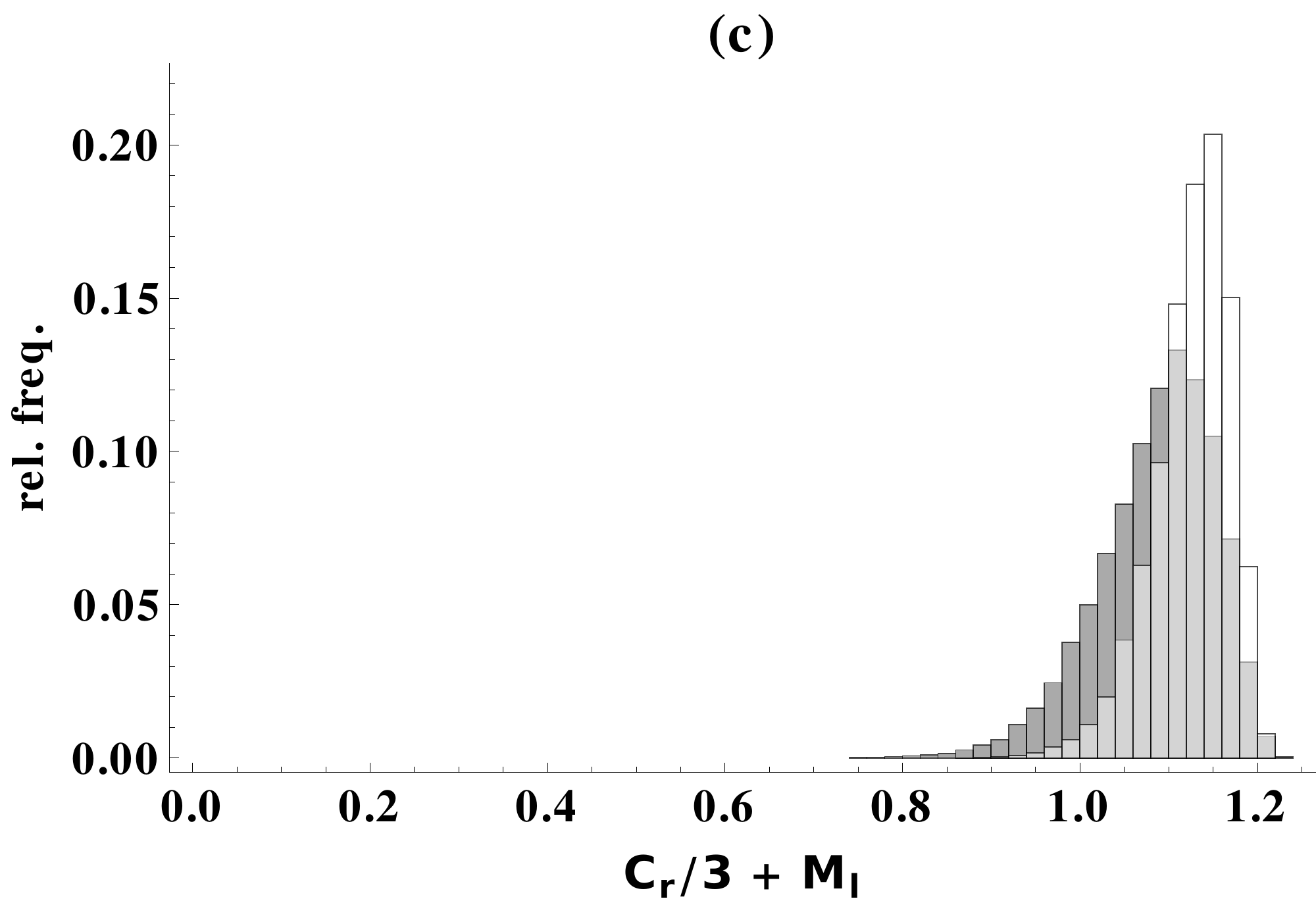} 
\includegraphics[width=1.6in, angle=0]{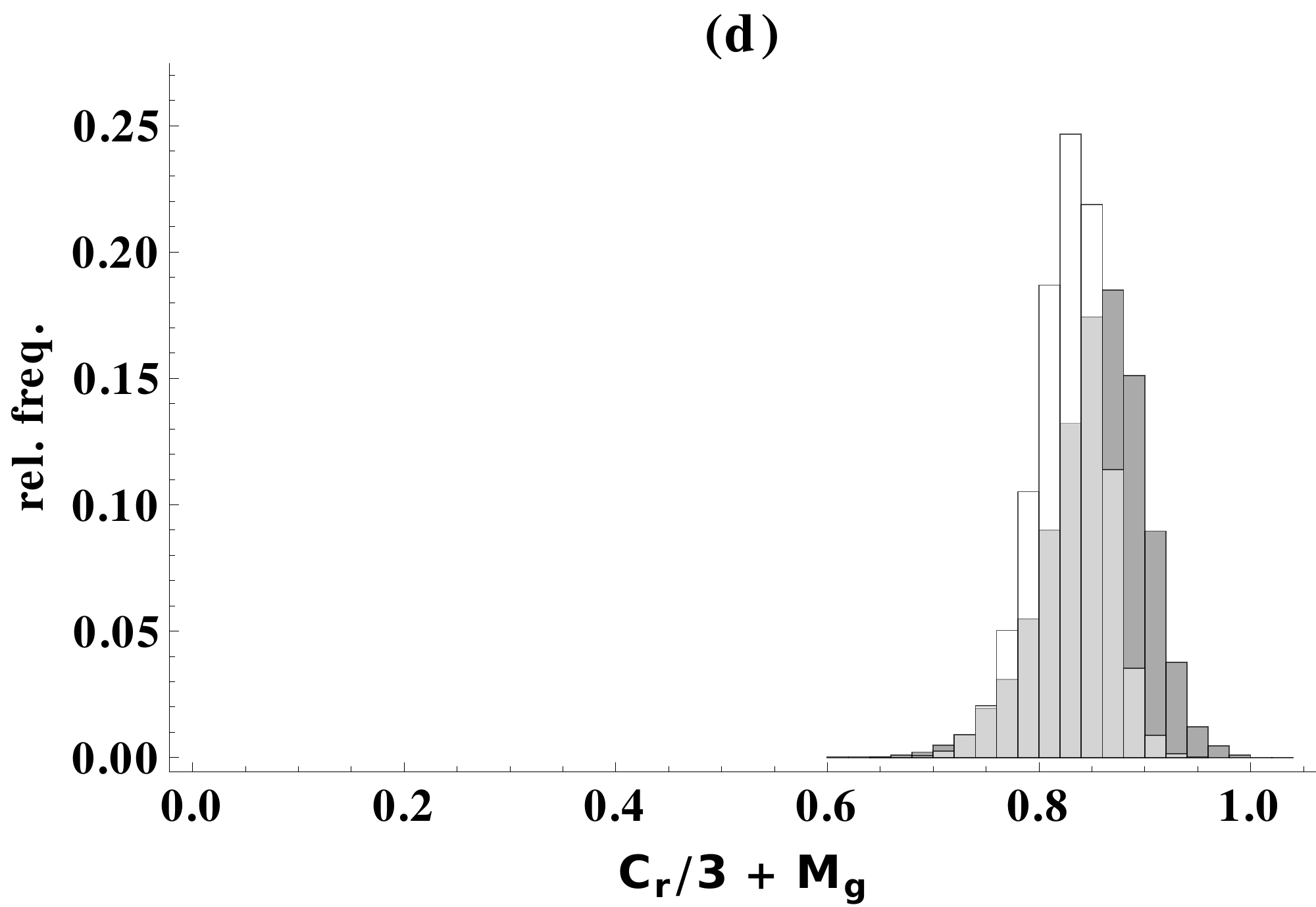}
\caption{(Color online.) Histograms depicting the relative frequency (rel. freq.) of quantum states against the trade-off between coherence and mixedness for different measures: (a) $\frac{C^2_{l_1}(\rho)}{(d-1)^2} + \frac{S(\rho)}{\mbox{ln}~d} {\leq} 1$, (b) $\frac{C^2_{l_1}(\rho)}{(d-1)^2} + M_g(\rho) {\leq} 1$, (c) $\frac{C_{r}(\rho)}{\mbox{ln}~d} + M_l(\rho) {\leq} 1$, and (d) $\frac{C_{r}(\rho)}{\mbox{ln}~d} + M_g(\rho) {\leq} 1$. For both rank-2 (gray bars) and rank-3 (white bars), $2\times 10^4$ {\bf three-qubit} states are generated Haar uniformly in the computational basis. We see that only the reciprocity relation, $\frac{C_{r}(\rho)}{\mbox{ln}~d} + M_l(\rho) {\leq} 1$, is in conflict. However, it is well below the trivial value 2, for all states. Higher is the rank of quantum states, more is the violation. 
}
\label{fig:coh-mix-reciprocity-3q}
\end{figure}
\end{center}

\begin{center}
\begin{figure}[htb]
\includegraphics[width=1.6in, angle=0]{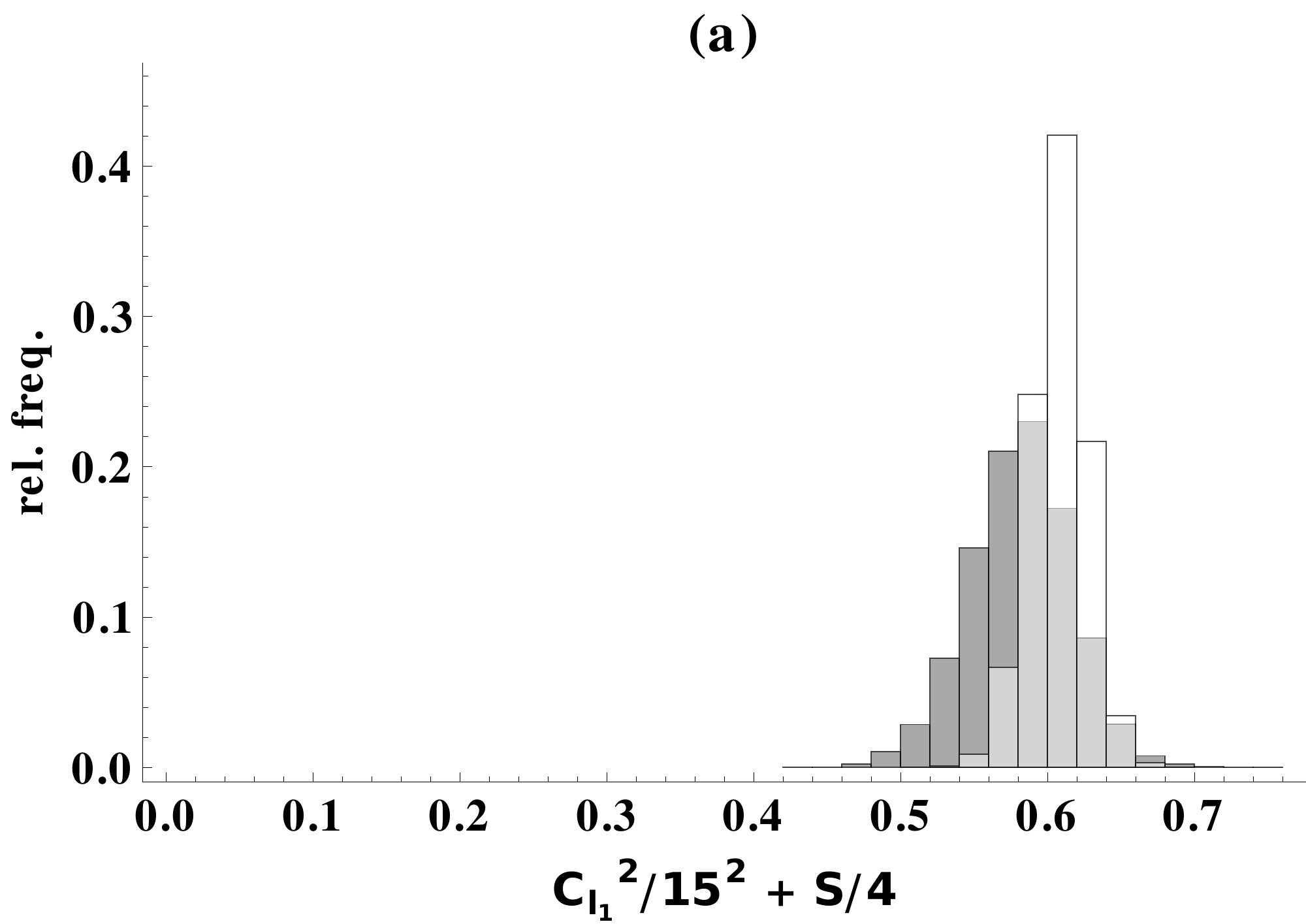} 
\includegraphics[width=1.6in, angle=0]{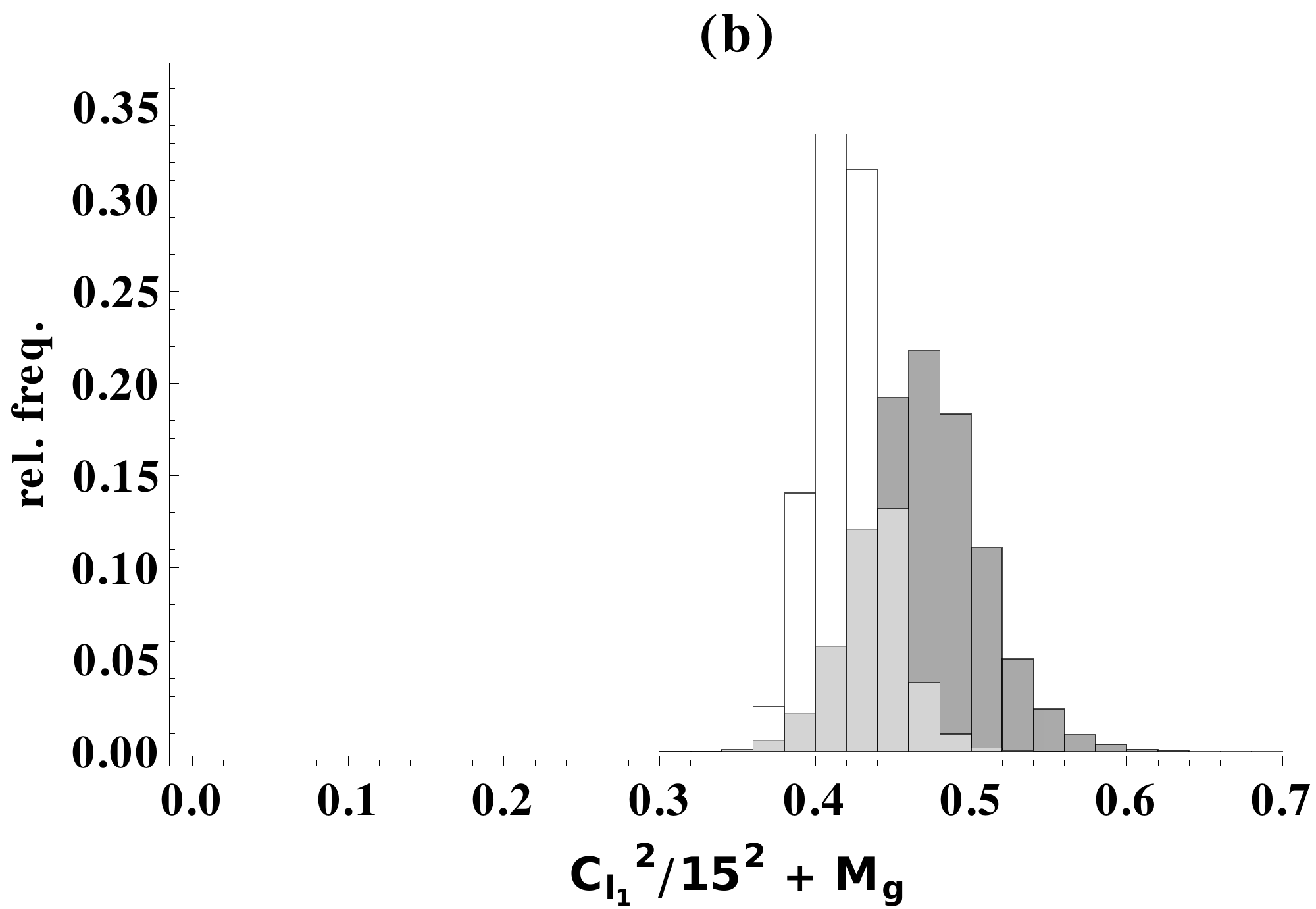}\\
\includegraphics[width=1.6in, angle=0]{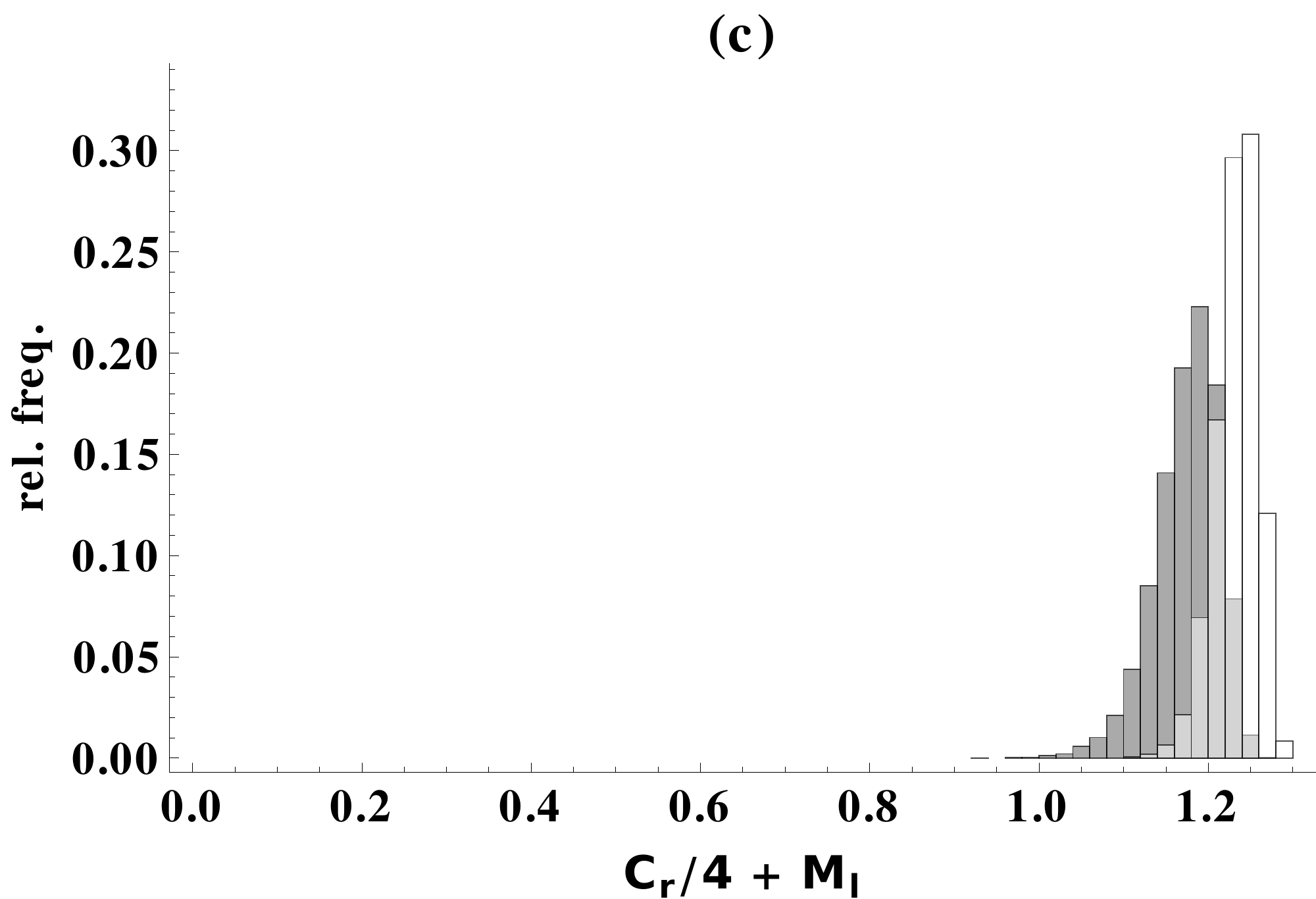} 
\includegraphics[width=1.6in, angle=0]{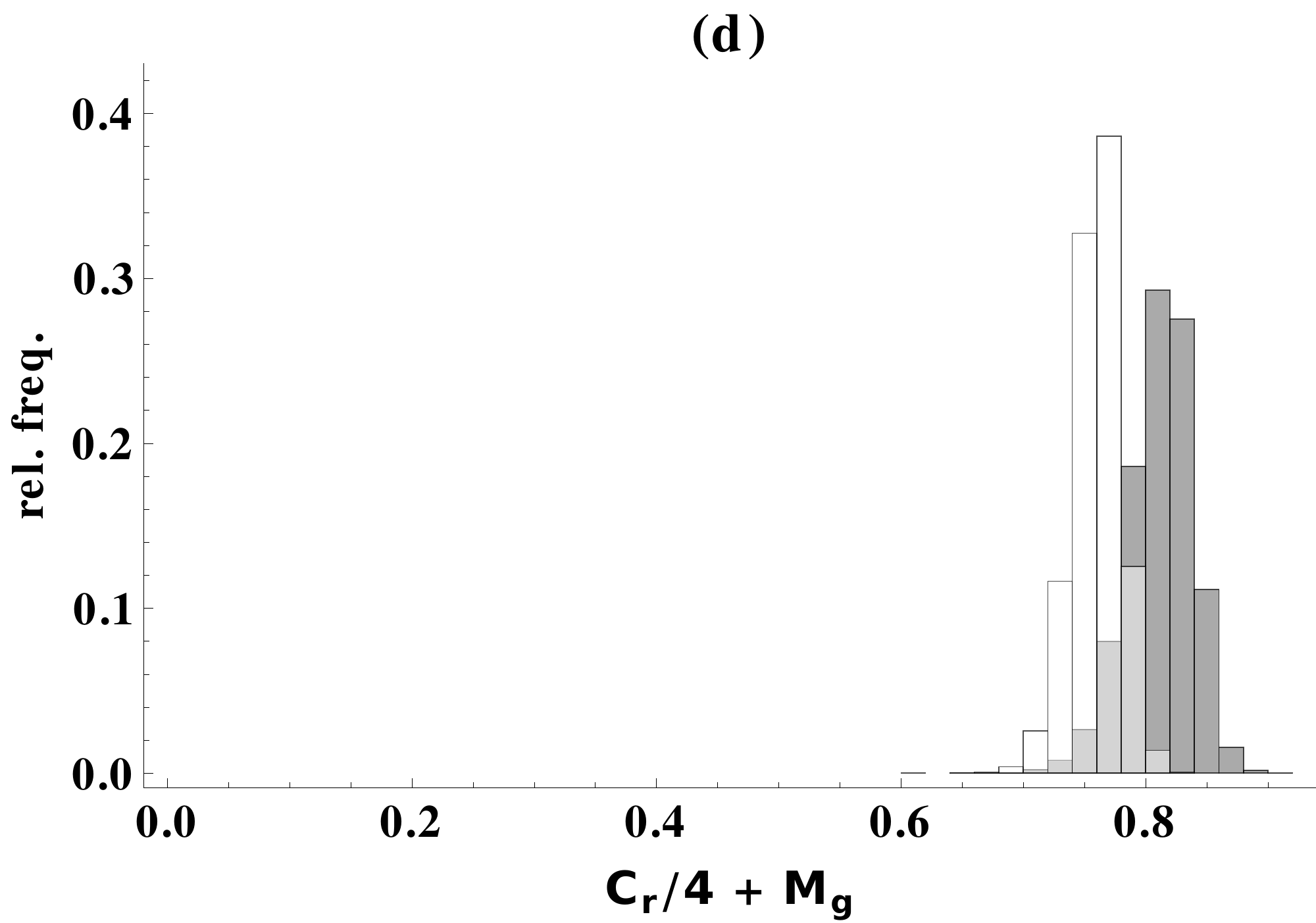}
\caption{(Color online.) Histograms depicting the relative frequency (rel. freq.) of quantum states against the trade-off between coherence and mixedness for different measures for rank-2 (gray bars) and rank-3 (white bars) {\bf four-qubit} states. Rest of the details are the same as in Fig. \ref{fig:coh-mix-reciprocity-3q}. 
}
\label{fig:coh-mix-reciprocity-4q}
\end{figure}
\end{center}

\section{Distribution of Quantum Coherence}
\label{coh-mono}
Quantum coherence is a resource. Coherence of a multiparty quantum system $\rho_{AB_1B_2\cdots B_n}$ is seen as a quantum correlation amongst the subsystems. We wish to study the distribution of coherence among the constituent subsystems. In particular, we are interested in the following monogamy-type relation \cite{ckw-mono}, which we refer to as additivity relation:
\begin{align}
C(\rho_{AB_1B_2\cdots B_n}) - \sum_{k=1}^n C(\rho_{AB_k})  \geq 0 ,
\label{eq:coh-mono}
\end{align}
where $C$ is some valid coherence measure and $\rho_{AB_k}$ is the two-party reduced density matrix obtained after partial tracing all subsystems but subsystems \(A\) and \(B_k\). If the above relation is satisfied for arbitrary quantum system $\rho_{AB_1B_2\cdots B_n}$, we say that the distribution of coherence is ``faithful'' with respect to subsystem \(A\). In the language of monogamy \cite{ckw-mono}, $C$ is monogamous with respect to pivot \(A\). If the above relation does not hold for any $\rho_{AB_1B_2\cdots B_n}$, $C$ is unfaithful or non-monogamous. Below we provide several interesting results on the distribution of quantum coherence in multipartite quantum systems. Remember that coherence is a basis-dependent quantity. In considering following results and theorems, we assume that quantum system under investigation is described in a fixed reference basis. Let $\{|a_i\rangle \}$ and 
$\{|b^{(k)}_j\rangle \}$ be the bases of subsystems \(A\) and \(B_k\) respectively, such that 
$|\psi\rangle_{AB} \equiv |\psi\rangle_{AB_1\cdots B_n}=\sum c_{a_ib_{k_1}\cdots b_{k_n}} 
|a_ib^{(1)}_{k_1}\cdots b^{(n)}_{k_n}\rangle$, $\rho_{AB}=\sum_{i} p_i
|\psi^i\rangle_{AB}\langle \psi^i|$, $\rho_{AB_j}=\mbox{Tr}_{\overline{AB_j}} (\rho_{AB})$, etc.

\subsection{Numerical results}
From numerical findings listed in Table \ref{table:coh-mono}, we observe a number of important results. The percentage of quantum states satisfying the coherence additivity relation increases with increasing number of parties, the rank of quantum states and raising the power of coherence measures under investigation. Furthermore, for fixed rank and fixed number of qubits, the number of quantum states which satisfy the monogamy condition is larger for entropy-based coherence measure than distance based coherence measure. 

\begin{center}
\begin{table}[htb]
\begin{tabular}{|c|c|c|c|c|c|c|}
\hline
{$\mbox{rank}$} & $\mbox{no. of qubits}$ & $\delta_{{\cal C}_{l_1}}$ & $\delta_{{\cal C}^2_{l_1}}$ & $\delta_{{\cal C}^3_{l_1}}$ & $\delta_{{\cal C}_{r}}$ & $\delta_{{\cal C}^2_{r}}$ \\ \hline \hline
                                     & 3                & 0.185                     & 32.045                      & 62.915                      & 5.14                    & 84.56                     \\ \cline{2-7} 
                                     & 4                & 0.015                     & 64.765                      & 94.445                      & 64.225                  & 99.92                     \\ \cline{2-7} 
\multirow{-3}{*}{1}                  & 5                & 0.035                     & 96.07                       & 99.95                       & 99.02                   & 100.                      \\ \hline \hline
                                     & 3                & 0.445                     & 38.245                      & 70.935                      & 75.425                  & 99.685                    \\ \cline{2-7} 
                                     & 4                & 0.095                     & 75.705                      & 97.74                       & 99.245                  & 100.                      \\ \cline{2-7} 
\multirow{-3}{*}{2}                  & 5                & 0.145                     & 98.715                      & 99.995                      & 99.995                  & 100.                      \\ \hline \hline
                                     & 3                & 0.615                     & 41.77                       & 73.885                      & 93.595                  & 99.98                     \\ \cline{2-7} 
                                     & 4                & 0.14                      & 79.475                      & 98.395                      & 99.975                  & 100.                      \\ \cline{2-7} 
\multirow{-3}{*}{3}                  & 5                & 0.185                     & 99.205                      & 100.                        & 100.                    & 100.                      \\ \hline \hline
                                     & 3                & 0.72                      & 42.385                      & 75.155                      & 97.                     & 99.985                    \\ \cline{2-7} 
                                     & 4                & 0.18                      & 80.845                      & 98.825                      & 100.                    & 100.                      \\ \cline{2-7} 
\multirow{-3}{*}{4}                  & 5                & 0.265                     & 99.385                      & 99.995                      & 100.                    & 100.                      \\ \hline
\end{tabular}
\caption{Percentage of quantum states, of varying ranks, satisfying the additivity relation for the ``normalized'' coherence measures ${\cal C}_{l_1}(\rho)$ and ${\cal C}_{r}(\rho)$ for \(3\), \(4\) and \(5\) qubits in the computational basis \cite{comment1}. The percentage of quantum states satisfying the additivity relation increases with increasing number of parties \cite{comment2}, with increment in the rank of quantum states, and with raising of the power of coherence measures under investigation. For every rank, $2\times 10^4$ three, four and five qubit states are generated Haar uniformly.}
\label{table:coh-mono}
\end{table}
\end{center}

\subsection{Analytical results}
In this section, we provide conditions for the violation of the additivity relation of the relative entropy of coherence $C_r$ \cite{note}. 

\begin{thm}
\label{t1}
The relative measure of coherence violates the additivity relation for quantum states $\rho_{AB} \equiv \rho_{AB_1B_2\cdots B_n}$ which satisfy $S(\rho_{AB}) + S(\rho_A)=\sum_{k=1}^n S(\rho_{AB_k})$.
\end{thm}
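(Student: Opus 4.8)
The plan is to turn the additivity deficit into a competition between a classical entropy difference (built from the dephased states) and a quantum one (built from the original states), and to show that the hypothesis pins the quantum difference to $S(\rho_A)$ while strong subadditivity forces the classical difference to be at least as large.

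First I would write the deficit $\Delta := C_{r}(\rho_{AB}) - \sum_{k=1}^{n} C_{r}(\rho_{AB_k})$ and expand each term through $C_{r}(\rho)=S(\rho_I)-S(\rho)$, where $\rho_I$ is the fully dephased state in the reference basis. Separating the dephased entropies from the full ones gives
\begin{align*}
\Delta = {} & \Big[\, S\big((\rho_{AB})_I\big) - \sum_{k=1}^{n} S\big((\rho_{AB_k})_I\big)\,\Big] \\
& - \Big[\, S(\rho_{AB}) - \sum_{k=1}^{n} S(\rho_{AB_k})\,\Big].
\end{align*}
By the hypothesis the second bracket equals $-S(\rho_A)$, so $\Delta = S(\rho_A) - X$ with $X := \sum_{k} S\big((\rho_{AB_k})_I\big) - S\big((\rho_{AB})_I\big)$.

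The central step is to lower-bound $X$. Since the dephased states are all diagonal in the product basis $\{|a_i b^{(1)}_{k_1}\cdots b^{(n)}_{k_n}\rangle\}$, their von Neumann entropies are the Shannon entropies $H(A,B_k)$ and $H(A,B_1\cdots B_n)$ of the marginals of one joint classical distribution, namely the diagonal of $\rho_{AB}$. Writing $H(A,B_k)=H(A)+H(B_k\mid A)$ and using nonnegativity of the conditional multi-information of $B_1,\dots,B_n$ given $A$ (equivalently, iterating strong subadditivity on this classical state) yields $X \geq (n-1)\,S\big((\rho_A)_I\big)$. Because dephasing cannot decrease entropy, $S\big((\rho_A)_I\big) \geq S(\rho_A)$ (this is just $C_{r}(\rho_A)\geq 0$), and hence for $n\geq 2$
\begin{align*}
\Delta = S(\rho_A) - X \leq S(\rho_A) - (n-1)\,S\big((\rho_A)_I\big) \leq -(n-2)\,S(\rho_A) \leq 0,
\end{align*}
which is the asserted violation of the additivity relation (\ref{eq:coh-mono}).

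I expect the main obstacle to be twofold. The conceptual hurdle is noticing that once the hypothesis is used, the surviving object $X$ is purely classical, so that quantum strong subadditivity enters only through the hypothesis itself; obtaining the decisive factor $(n-1)$ requires the iterated (classical) subadditivity chain rather than a single application, and it is this factor that makes $X$ dominate $S(\rho_A)$. The finer point is strictness: the chain only shows $\Delta\leq 0$, and the inequality is strict exactly when $\rho_A$ carries coherence in its local basis, $S\big((\rho_A)_I\big)>S(\rho_A)$, or the dephased $B_k$ are conditionally correlated given $A$. The degenerate case where $\rho_A$ is incoherent and the dephased subsystems $B_1,\dots,B_n$ are conditionally independent given $A$ saturates the relation with equality, and should be flagged as the non-strict edge of the statement.
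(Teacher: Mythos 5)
Your proof is correct and follows essentially the same route as the paper: after substituting the hypothesis, your remainder $S(\rho_A)-X$ is exactly the paper's $-\Delta_2-C_r(\rho_A)$, and your bound $X\geq (n-1)\,S\bigl((\rho_A)_I\bigr)$ is precisely the paper's strong-subadditivity lemma $\sum_{k=1}^n S(\sigma_{AB_k})-S(\sigma_{AB})\geq (n-1)S(\sigma_A)$ applied to the dephased state, where quantum SSA reduces to the classical chain-rule argument you use. Your slightly sharper quantitative estimate $\Delta\leq -(n-2)S(\rho_A)$ and the discussion of the equality case are pleasant extras but do not change the argument.
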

\begin{proof}
Let $\rho_{AB} \equiv \rho_{AB_1B_2\cdots B_n}$ be the density matrix of an $(n+1)$-party quantum system, and $\rho_{AB_k}=\mbox{Tr}_{\overline{AB_k}}\rho_{AB_1B_2\cdots B_n}$ be the reduced density matrix obtained after partial tracing all subsystems but subsystems \(A\) and \(B_k\). Then
\begin{eqnarray}
&& C_r(\rho_{AB})-\sum_{k=1}^n C_r(\rho_{AB_k}) \nonumber \\
& = & S(\rho ^I_{AB})-S(\rho_{AB})-\sum_{k=1}^n \left[S(\rho ^I_{AB_k})-S(\rho_{AB_k})\right] \nonumber \\
& = & \left[\sum_{k=1}^n S(\rho_{AB_k})-S(\rho_{AB})-S(\rho_A)\right]\nonumber \\ 
& - & \left[\sum_{k=1}^n S(\rho ^I_{AB_k})-S(\rho ^I_{AB})-S(\rho ^I_A)\right]
- \left[S(\rho ^I_A)-S(\rho_A)\right]\nonumber \\
& = & \Delta_{1} - \Delta_{2 } - C_r(\rho_A).
\end{eqnarray}
It can be easily shown that for $\sigma_{AB} \equiv \sigma_{AB_1B_2\cdots B_n}$, $\sum_{k=1}^n S(\sigma_{AB_k})-S(\sigma_{AB}) \geq (n-1)S(\sigma_A) \geq 0$. This bound is a simple consequence of the strong subadditivity relation,  $S(\rho_{ABC})+S(\rho_A) \leq S(\rho_{AB})+S(\rho_{AC})$, of von Neumann entropy. Hence $\Delta_{1}$ and $\Delta_{2}$ are non-negative. 
When $\sum_{k=1}^n S(\rho_{AB_k})=S(\rho_{AB}) + S(\rho_A)$, i.e., $\Delta_{1}$ vanishes, $C_r(\rho_{AB}) \leq \sum_{k=1}^n C_r(\rho_{AB_k})$. Thus $C_r$ violates the additivity relation.\\
Very recently, a special case of above result was obtained in Ref. \cite{yao}. It was shown that $C_r$ violates the additivity relation for an arbitrary tripartite state $\rho_{ABC}$ which saturates the strong subadditivity relation 
of von Neumann entropy. Tripartite states satisfying the strong subadditivity relation are reported in Ref. \cite{hpw246}.\\ 

Analogous result can be obtained for other distributions as well. For instance, the following distribution yields 
\begin{eqnarray}
&& C_r(\rho_{AB})-\sum_{k=1}^n C_r(\rho_{A\overline{B_k}}) \nonumber \\
& = & S(\rho ^I_{AB})-S(\rho_{AB})-\sum_{k=1}^n \left[S(\rho ^I_{A\overline{B_k}})-S(\rho_{A\overline{B_k}})\right] \nonumber \\
& = & \left[S(\rho ^I_{AB})-\sum_{k=1}^n S(\rho ^I_{A\overline{B_k}})\right] + \left[\sum_{k=1}^n S(\rho_{A\overline{B_k}})-S(\rho_{AB})\right] \nonumber \\
& = & \left[\sum_{k=1}^n S(\rho_{A\overline{B_k}})-(n-1)S(\rho_{AB})\right] - \left[\sum_{k=1}^n S(\rho ^I_{A\overline{B_k}})-(n-1)S(\rho ^I_{AB})\right] \nonumber \\
& - & (n-2)\left[S(\rho ^I_{AB})-S(\rho_{AB})\right] \nonumber \\
& = & \Delta_{3} - \Delta_{4} - (n-2)C_r(\rho_{AB}),
\end{eqnarray}
where $\rho_{A\overline{B_k}}=\mbox{Tr}_{B_k}\rho_{AB_1B_2\cdots B_n}$ be the reduced density matrix obtained after partial tracing subsystem \(B_k\).
Again since $\sum_{k=1}^n S(\sigma_{A\overline{B_k}})-(n-1)S(\sigma_{AB}) \geq S(\sigma_A) \geq 0$ for $\sigma_{AB} \equiv \sigma_{AB_1B_2\cdots B_n}$ \cite{asu-qmi, note1}, $\Delta_{3}$ and $\Delta_{4}$ are non-negative. When either $\Delta_{3} \leq \Delta_{4} + (n-2)C_r(\rho_{AB})$ or $\Delta_{3}=0$, $C_r(\rho_{AB}) \leq \sum_{k=1}^n C_r(\rho_{A\overline{B_k}})$. Thus $C_r$ violates the additivity relation.  
\end{proof}

However, coherence measures are not normalized in general. That is, they do not lie between zero and unity for arbitrary quantum systems. But in investigating monogamy relations for quantum correlation measures, we consider that value of all quantities in the monogamy inequality lies in the same range. Therefore, it is reasonable to consider the normalized coherence. Suppose that $\rho_{AB} \equiv \rho_{AB_1B_2\cdots B_n} \in 
\left(\mathbb{C}^d\right)^{\otimes (n+1)}$ be a multipartite density operator. Considering the normalized relative entropy of coherence, we have
\begin{eqnarray}
&& \frac{C_r(\rho_{AB})}{\mbox{ln}~d^{n+1}}-\sum_{k=1}^n \frac{C_r(\rho_{AB_k})}{\mbox{ln}~d^2} \nonumber \\
& = & \frac{2(\Delta_{1} - \Delta_{2})-(n-1)\sum_{k=1}^n C_r(\rho_{AB_k})}{2(n+1)\mbox{ln}~d}.
\end{eqnarray}
Again, when $\Delta_{1}$ vanishes, the normalized $C_r$ does not satisfy the additivity relation. Similarly, for the other distribution, we can obtain
\begin{eqnarray}
&& \frac{C_r(\rho_{AB})}{\mbox{ln}~d^{n+1}}-\sum_{k=1}^n \frac{C_r(\rho_{A\overline{B_k}})}{\mbox{ln}~d^{n}}  \nonumber \\
& = & \frac{n(\Delta_{3} - \Delta_{4}) - \sum_{k=1}^n C_r(\rho_{A\overline{B_k}}) - n(n-2)C_r(\rho_{AB})}{n(n+1)\mbox{ln}~d}.
\end{eqnarray}
Thus, when $\Delta_{3}=0$, the normalized $C_r$ violates the additivity relation.\\

\section{Coherence in X states}
\label{x-state}
Quantum states having ``X''-structure are referred to as X states. Consider an \(n\)-qubit X state given by
\begin{equation}
\rho = p|gGHZ\rangle\langle gGHZ|+(1-p)\frac{I_d}{d},
\label{eq:XXstate}
\end{equation}
where $|gGHZ\rangle=(\alpha|0\rangle^{\otimes n}+\beta|1\rangle^{\otimes n})$ with $|\alpha|^2+|\beta|^2$, $I_d$ is $d\times d$ identity matrix, $d=2^n$ and $0\leq p \leq 1$. It is easy to show that for this state, $C_{l_1}(\rho)=2p|\alpha \bar{\beta}|$ and 
$C_r(\rho)=-\left(p|\alpha|^2+\frac{1-p}{d}\right)\log_2\left(p|\alpha|^2+\frac{1-p}{d}\right)
-\left(p|\beta|^2+\frac{1-p}{d}\right)\log_2\left(p|\beta|^2+\frac{1-p}{d}\right)
-\frac{1-p}{d}\log_2 \frac{1-p}{d}-\left(p+\frac{1-p}{d}\right)\log_2\left(p+\frac{1-p}{d}\right)$.
\begin{thm}
\label{t9}
For an \((n+1)\)-party X state $\rho ^X_{AB_1B_2\cdots B_n}$ in a given basis, any measure of coherence $C$ satisfies the additivity relation. That is, X states satisfy the relation $C(\rho ^X_{AB_1B_2\cdots B_n}) - \sum_{k=1}^n C(\rho ^X_{AB_k}) \geq 0$, where $\rho ^X_{AB_k}=\mbox{Tr}_{\overline{AB_k}}\rho ^X_{AB_1B_2\cdots B_n}$ is a two-qubit reduced density matrix. 
\end{thm}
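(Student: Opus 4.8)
The plan is to exploit the rigid structure of an X state together with only the two most elementary properties of a coherence measure: that $C(\rho)=0$ for every incoherent (diagonal) $\rho$, which is axiom (1), and that $C$ is non-negative. First I would record the defining feature of an $(n+1)$-qubit X state in the computational basis $\{|m\rangle\}$, $m=0,\dots,d-1$ with $d=2^{n+1}$: its density matrix has nonzero entries only on the main diagonal and on the anti-diagonal, so that the only off-diagonal (coherence) elements $\rho_{m\ell}$ are those linking a basis vector $|m\rangle$ to its bitwise complement $|\bar m\rangle$, where $\bar m$ is the state obtained by flipping every qubit (equivalently $\ell=\bar m=d-1-m$). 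Writing $|m\rangle=|m_A m_{B_1}\cdots m_{B_n}\rangle$, each such anti-diagonal contribution factorizes as $|m_A\rangle\langle\bar m_A|\otimes\bigotimes_{j=1}^n |m_{B_j}\rangle\langle\bar m_{B_j}|$.

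Next I would compute the two-party reduced state $\rho^X_{AB_k}=\mathrm{Tr}_{\overline{AB_k}}\rho^X_{AB}$ and show it is diagonal. The diagonal part of $\rho^X_{AB}$ reduces to a diagonal operator on $AB_k$, since partial tracing a diagonal operator leaves it diagonal. For the anti-diagonal part, tracing out any subsystem $B_j$ with $j\neq k$ produces the scalar factor $\langle\bar m_{B_j}|m_{B_j}\rangle$; because a single qubit state and its bit-flip are orthogonal, this factor vanishes. Provided $n\geq 2$ there is at least one index $j\neq k$ to be traced out, so every anti-diagonal term is annihilated and $\rho^X_{AB_k}$ retains only its diagonal, i.e. it is an incoherent state.

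With the reduced states shown to be incoherent, the conclusion is immediate. By axiom (1) we have $C(\rho^X_{AB_k})=0$ for every $k$ and for \emph{any} valid coherence measure $C$. Non-negativity of $C$ follows from the monotonicity axiom (2a) by choosing $\Phi_I$ to be the full dephasing map, which sends $\rho^X_{AB}$ to its diagonal and yields $0=C(\Phi_I(\rho^X_{AB}))\leq C(\rho^X_{AB})$. Hence
\begin{align}
C(\rho^X_{AB_1B_2\cdots B_n}) - \sum_{k=1}^n C(\rho^X_{AB_k}) = C(\rho^X_{AB_1B_2\cdots B_n}) \geq 0 .
\end{align}
The degenerate case $n=1$ is trivial, since then no subsystem is traced out, $\rho^X_{AB_1}=\rho^X_{AB}$, and the left-hand side vanishes identically.

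The only genuinely nontrivial point---and the step I would treat most carefully---is the claim that partial tracing over a single party destroys all coherence of an X state. This is transparent for the specific gGHZ-plus-noise state of Eq.~(\ref{eq:XXstate}), which carries a lone coherence term $p\alpha\bar\beta|0\cdots0\rangle\langle1\cdots1|$, but for a general X state one must verify that every anti-diagonal element indeed pairs a bit string with its \emph{complete} complement, so that no partial trace over any single qubit can leave a surviving off-diagonal entry. Establishing this pairing carefully is what makes the argument work simultaneously for an arbitrary coherence measure and for an arbitrary X state, rather than only for the worked example.
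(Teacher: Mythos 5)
Your proof is correct and follows essentially the same route as the paper, whose one-line argument is precisely that all two-party reduced density matrices of an X state are diagonal (hence incoherent, with $C=0$ by axiom (1)), so the additivity relation reduces to $C(\rho^X_{AB_1\cdots B_n})\geq 0$. You merely supply the details the paper leaves implicit---the bitwise-complement pairing of anti-diagonal entries, the vanishing single-qubit overlap $\langle \bar m_{B_j}|m_{B_j}\rangle=0$ under partial trace, non-negativity of $C$ via dephasing, and the trivial $n=1$ case---which is a careful fleshing-out rather than a different approach.
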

\begin{proof}
This is because all the two-party reduced density matrices of $(n+1)$-party X states in the given basis are diagonal and any valid measure of quantum coherence vanishes for diagonal states. 
\end{proof}
\begin{center}
\begin{figure}[htb]
\includegraphics[width=3.2in, angle=0]{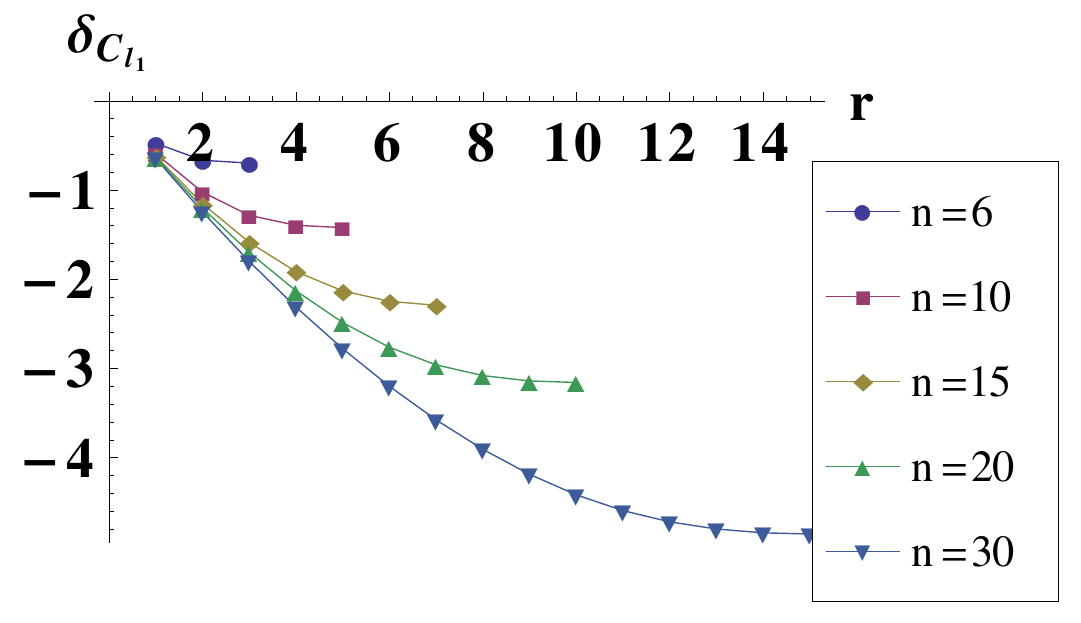}
\includegraphics[width=3.2in, angle=0]{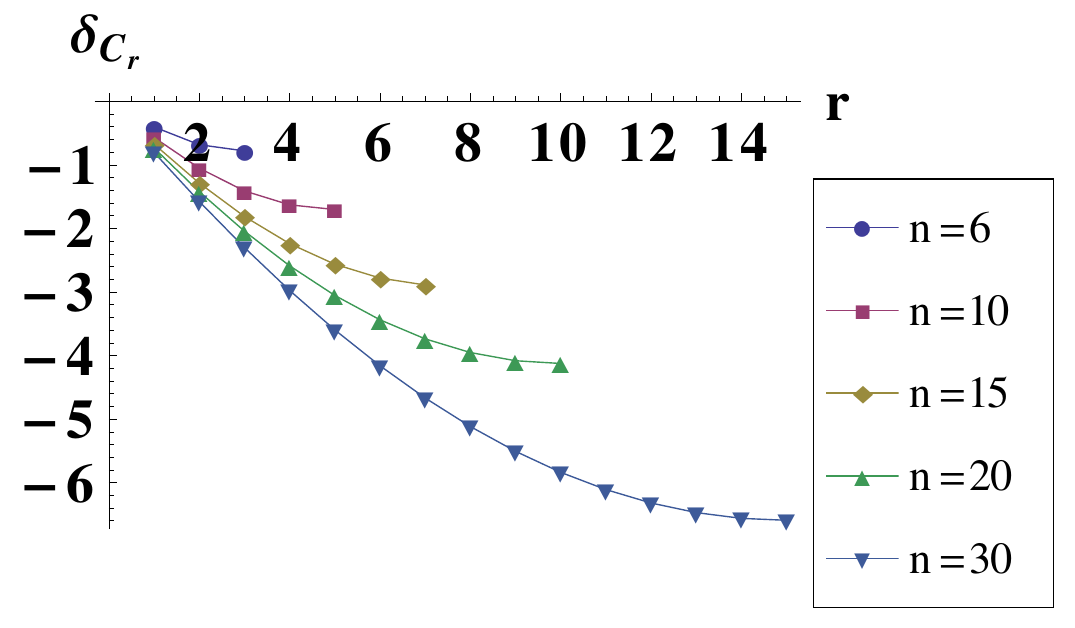}
\caption{Coherence score (y-axis) versus the number of excitations $r \leq \frac{n}{2}$ (x-axis) of Dicke states using the ``normalized'' coherence measures $C_{l_1}$ (top panel) and $C_r$ (bottom panel). All quantities are dimensionless. The normalized measures of coherence do not satisfy the additivity relation of coherence for the Dicke states.}
\label{fig:dicke-coh-score}
\end{figure}
\end{center}

Now consider the Dicke states \cite{dickestates}, which is symmetric with respect to interchange of qubits, given by
\begin{equation}
|D_{n,r}\rangle=\binom{n}{r} ^{-\frac12} \sum_{{\cal P}} {\cal P}\left(|0\rangle^{\otimes (n-r)}\otimes 
|1\rangle^{\otimes r}\right),
\label{eq:Dicke}
\end{equation}
where $\sum_{{\cal P}}$ represents sum over all $\binom{n}{r}$ permutations of $(n-r)$ $|0\rangle$s and 
$r$ $|1\rangle$s.
Note that the Dicke state itself, in Eq.~(\ref{eq:Dicke}), is not an X state but its all two-qubit reduced density matrices, in the computational basis, are same and has X structure. Again, one can show that for the normalized measures of coherence 
\begin{align}
\delta_{C^{(norm)}_{l_1}}\left(|D_{n,r}\rangle\right)&=\frac{C_{l_1}\left(|D_{n,r}\rangle\right)}{2^n-1}
-(n-1)\frac{C_{l_1}\left(\rho^{(2)}_{|D_{n,r}\rangle}\right)}{3} \nonumber \\
&=\frac{\binom{n}{r}-1}{2^n-1}-\frac{2r(n-r)}{3n},
\label{eq:dicke-mono-cp1-norm}
\end{align}

and 
\begin{align}
\delta_{C^{(norm)}_{r}}\left(|D_{n,r}\rangle\right)&=\frac{C_{r}\left(|D_{n,r}\rangle\right)}{\log_2 2^n}
-(n-1)\frac{C_{r}\left(\rho^{(2)}_{|D_{n,r}\rangle}\right)}{\log_2 4} \nonumber \\
&=\frac{1}{n}\log_2 \binom{n}{r}-\frac{r(n-r)}{n},
\label{eq:dicke-mono-cr-norm}
\end{align}
where $\rho^{(2)}_{|D_{n,r}\rangle}$ is the two-qubit reduced density matrix of the Dicke state.
For \(n \geq 3\) and $1 \leq r \leq n$, $\delta_{C_{l_1}}(|D_{n,r}\rangle)$ and 
$\delta_{C_{l_1}}(|D_{n,r}\rangle)$ are non-positive. Thus, quantum coherence measures violate the additivity relation for the Dicke states in the computational basis (see Fig. \ref{fig:dicke-coh-score}).\\
Analogous result was obtained in Ref. \cite{apau15}, that the Dicke state is always non-monogamous with respect to quantum discord \cite{discord} and quantum work-deficit \cite{work-deficit}, and the Dicke state with more number of parties is more non-monogamous to that with a smaller number of parties.\\

However, if one considers the unnormalized measures of coherence, then
\begin{align}
\delta_{C_{l_1}}\left(|D_{n,r}\rangle\right)&=C_{l_1}\left(|D_{n,r}\rangle\right)
-(n-1)C_{l_1}\left(\rho^{(2)}_{|D_{n,r}\rangle}\right) \nonumber \\
&=\binom{n}{r}-1-\frac{2r(n-r)}{n},
\label{eq:dicke-mono-cp1}
\end{align}

and 
\begin{align}
\delta_{C_{r}}\left(|D_{n,r}\rangle\right)&=C_{r}\left(|D_{n,r}\rangle\right)
-(n-1)C_{r}\left(\rho^{(2)}_{|D_{n,r}\rangle}\right) \nonumber \\
&=\log_2 \binom{n}{r}-\frac{2r(n-r)}{n}.
\label{eq:dicke-mono-cr}
\end{align}
In this case, when \(n \geq 3\) and $1 \leq r \leq n$, $\delta_{C_{l_1}}(|D_{n,r}\rangle)$ and 
$\delta_{C_{l_1}}(|D_{n,r}\rangle)$ are non-negative. Thus, quantum (unnormalized) coherence measures satisfy the additivity relation for the Dicke states in the computational basis.

\section{Conclusion}
\label{discussion}

In this paper, we have shown that the reciprocity between coherence and mixedness of quantum states is a general feature as this complementarity holds for large spectra of measures of coherence and of mixedness.
The numerical investigation of the distribution of coherence in multipartite systems reveals that the percentage of quantum states satisfying the additivity relation increases with increasing number of parties, with increment in the rank of quantum states, and with raising of the power of coherence measures under investigation. 
We have provided conditions for the violation of the additivity relation of the relative entropy of coherence. 
We have further shown that the normalized measures of coherence violate the additivity relation for the Dicke states.\\

\begin{acknowledgments}
AK acknowledges a research fellowship of the Department of Atomic Energy, Government of India. The author is very grateful to Ujjwal Sen, Uttam Singh and Himadri Shekhar Dhar for useful comments and suggestions. 
\end{acknowledgments}

\end{document}